\newtheorem{theorem}{Theorem}
\newtheorem{proposition}[theorem]{Proposition}
\newtheorem{corollary}[theorem]{Corollary}
\theoremstyle{definition}
\newtheorem{definition}[theorem]{Definition}
\newtheorem{notation}[theorem]{Notation}
\newtheorem{remark}[theorem]{Remark}
\newtheorem{example}[theorem]{Example}
\Crefname{equation}{Eq.}{Eqs.}
\Crefname{figure}{Fig.}{Figs.}
\Crefname{construction}{Constr.}{Constrs.}
\Crefname{example}{Ex.}{Exs.}
\newcommand{\N}{\mathds{N}}
\newcommand{\F}{\mathds{F}}
\newcommand{\R}{\mathds{R}}
\newcommand{\code}{\mathcal{C}}
\newcommand{\Scal}{\mathcal{S}}
\newcommand{\Dcode}{\mathcal{D}}
\newcommand{\Gcal}{\mathcal{G}}
\newcommand{\Gbf}{\mathbf{G}}
\newcommand{\gbf}{\mathbf{g}}
\newcommand{\Bbf}{\mathbf{B}}
\newcommand{\bbf}{\mathbf{b}}
\newcommand{\Fbf}{\mathbf{F}}
\newcommand{\fbf}{\mathbf{f}}
\newcommand{\ebf}{\mathbf{e}}
\newcommand{\Sbf}{\mathbf{S}}
\newcommand{\xbf}{\mathbf{x}}
\newcommand{\PG}{\textnormal{PG}}
\newcommand{\Rcal}{\mathcal{R}}
\newcommand{\Tcal}{\mathcal{T}}
\title{The Oval Strikes Back}
\author{%
  \IEEEauthorblockN{\textbf{Andrea Di Giusto}$^1$, \textbf{Alberto Ravagnani}$^1$ and \textbf{Emina Soljanin}$^2$}
  \IEEEauthorblockA{$^1$Eindhoven University of Technology, the                           Netherlands,   
                    $^2$Rutgers University\\
                    \textit{\{a.di.giusto, a.ravagnani\}@tue.nl}, \textit{emina.soljanin@rutgers.edu}
                    }
}
\begin{document}

\maketitle

\begin{abstract}
    We investigate the applications of ovals in projective planes to distributed storage, with a focus on the Service Rate Region problem. Leveraging the incidence relations between lines and ovals, we describe a class of non-systematic MDS matrices with a large number of small and disjoint recovery sets. For certain parameter choices, the service-rate region of these matrices contains the region of a systematic generator matrix for the same code, yielding better service performance. We further apply our construction to analyze the PIR properties of the considered MDS matrices and present a one-step majority-logic decoding algorithm with strong error-correcting capability. These results highlight how ovals, a classical object in finite geometry, re-emerge as a useful tool in modern coding theory.
\end{abstract}

\section{Introduction}
An \textit{oval} is a set of points in the projective plane $\PG(2,q)$ with the property that no three of them are collinear, and having maximum size.
Many properties of ovals are known since the mid-$20^\text{th}$ century. Their cardinality is~\cite{bose1947mathematical}
\begin{equation}\label{eq:oval_size}
    m(2,q)=\begin{cases}
        q+2\quad\text{if $q$ is even,}\\
        q+1\quad\text{if $q$ is odd,}
    \end{cases}
\end{equation}
and an account of their geometric properties can be found in, e.g.,~\cite[Ch. 8]{hirschfeld1979projective}.
Ovals are mathematically equivalent to linear MDS codes of dimension $k=3$ and maximum length $n=m(2,q)$; see~\cite[Ch. 11]{macwilliams1977theory}.
In this work, we leverage the geometric properties of ovals to find generator matrices of MDS codes that are especially well-suited for distributed storage applications, in particular for optimizing the service performance of the corresponding distributed storage system.

Digital applications generate an ever-increasing amount of data~\cite{rydning2018digitization}, with storage and access efficiency becoming key performance indicators of modern computing systems.
Data objects are usually replicated on multiple servers based on their popularity.
Changes in the request volumes for each data object pose an issue for replication-based storage strategies. Linear codes have been proposed to implement efficient strategies for adding redundancy, thereby increasing robustness and availability; see~\cite{dimakis2011survey} and others.

The \textit{Service Rate Region} (SRR)~\cite{aktacs2017service} has recently emerged as a metric to quantify the ability of a storage system to handle different patterns of requests; see~\cite{alfarano2023service} for an expository paper, and~\cite{aktacs2021service,alfarano2024service} for a detailed introduction.
In the standard model, $k$ data objects are stored across $n\ge k$ servers, and each server can handle requests up to a maximum rate. The SRR is the set of request rates for data objects that the system can support without exceeding each server's serving capacity.
In practice, the redundancy strategy is specified by a generator matrix $\Gbf$ for an $[n,k]_q$ linear block code $\code$.

Formally, the SRR depends on the chosen $\mathbf{G}$, and different generator matrices of the same code can give different regions.
In the literature, the focus is often on systematic matrices, which model storage systems in which an identical copy of every data object is stored on at least one server.
This choice is intuitively motivated by the fact that, at low request rates, systematic replication systems use only the servers storing identical copies and are therefore very efficient.
In contrast, this paper investigates non-systematic systems, in particular for MDS generator matrices. We will argue that they can offer better performance than their systematic counterparts, due to the symmetry in server load balancing. 

\subsubsection*{Related work} This paper is naturally framed in a recently developed line of research, studying the SRR of generator matrices of certain families of codes.
Other works investigated simplex codes~\cite{aktacs2021service,kazemi2020combinatorial}, Hamming codes~\cite{ly2025maximal,choudhary2025service}, binary Reed-Muller codes~\cite{aktacs2021service,kazemi2020geometric,ly2025serviceRM}, and MDS codes~\cite{aktacs2021service,alfarano2024service,ly2025serviceMDS}.
It was also observed in~\cite{kazemi2020combinatorial} how the service rate problem is connected to PIR/batch codes, and these two topics have close ties to majority logic decoding~\cite{fazeli2015codes,ly2025optimum,ly2025maximal}.

\subsubsection*{Our contribution}
We study the SRR of a specific type of non-systematic matrix $\Gbf$ generating an $[n,3]_q$ MDS code of maximum length $n = m(2,q)$.
The algebraic properties of this matrix can be analyzed through the geometric properties of the associated oval in $\PG(2,q)$.
To the best of our knowledge, this is the first work examining the SRR of an MDS code using geometric techniques, which are typically reserved for binary codes such as Reed–Muller codes.

The recovery system we consider differs substantially in its structure from those previously studied for MDS codes~\cite{aktacs2021service,ly2025serviceMDS}; see also~\Cref{rem:difference_other_MDS}.
We characterize the SRR of these matrices, showing that for certain parameter choices it strictly contains the region of a systematic generator matrix of the same code.
This result indicates that non-systematic matrices can offer competitive recovery performance and should not be disregarded \textit{a priori} in favor of systematic ones.

The recovery system we obtain has a simple, elegant description and is well-suited for other applications.
Using our results, we show that the codes considered are $n/2$-PIR codes and admit a 1-Step Majority Logic Decoding algorithm (1S-MLD) that corrects \smash{$\lfloor\frac{n-2}{4}\rfloor$} errors.
This number exceeds the bound of~\cite[Th. 10.2]{peterson1972error}, which is considered a standard for 1S-MLD.
We discuss this discrepancy and highlight new avenues for developing this decoding technique.

The rest of this paper is organized as follows:
In~\Cref{sec:problem_statement} we introduce the SRR and the class of codes we focus on, together with their corresponding geometric properties.
\Cref{sec:main_result} contains the main results of the paper about the non-systematic recovery system and its SRR.
In~\Cref{sec:applications} we explore applications of our findings to Majority-Logic Decoding and PIR codes. \Cref{sec:conclusion} concludes the paper and outlines future work.

\section{The Service Rate Region and Ovals}\label{sec:problem_statement}

We here establish notation and introduce the service-rate region, focusing on its geometric aspects, and recap some geometric properties of ovals.

\noindent\textbf{Notation:}
    $q$ is a power of a prime and $\F_q$ is the finite field of size $q$.
    For $a\in\N$, we let $[a]=\{1,2,\ldots,a\}$.
    $\F_q^a$ is the vector space of $a-$uples over $\F_q$, and $\leqslant$ denotes the subspace relation.
    For $i\in[a]$, we let $\mathbf{e}_i$ denote the $i^{th}$ vector of the standard basis of $\F_q^a$.
    All vectors are column vectors, and the transpose of a matrix/vector is denoted by the superscript~$^T$.

We consider a distributed storage system that redundantly stores $k \ge 1$ objects on $n \ge k$ servers.
The objects are elements of $\F_q$, and each server stores a linear combination of them.
The storage system is thus completely specified by a matrix $\mathbf{G}=(\mathbf{g}_1 \mid\ldots\mid\mathbf{g}_n)\in\F_q^{k\times n}$, where each column $\mathbf{g}_j\in\F_q^k$ contains the coefficients of the linear combination stored by the corresponding servers.
The $k$-tuple of objects $\mathbf{m}=(m_1,\ldots,m_k)\in\F_q^k$ is stored as $\mathbf{c}^T=(c_1,\ldots,c_n)=\mathbf{m}^T\cdot \mathbf{G}$, where $c_j$ is the content of the $j^\text{th}$ server.
Users retrieve each object separately.
To recover an object, a user must download the contents of a subset of servers, with several options available.
For $J\subseteq[n]$, it is readily seen that for any $i\in[k]$, $m_i$ is a linear combination of~$\{c_j\mid j\in J\}$ iff $\mathbf{e}_i\in \langle\mathbf{g}_j\mid j\in J\rangle$, which leads to the following definition.
\begin{definition}\label{def:minimal_rec_set}
    A \textit{recovery set} for the $i^{th}$ object is any set $J\subseteq[n]$ s.t. $\mathbf{e}_i\in \langle\mathbf{g}_j\mid j\in J\rangle$. 
    A recovery set is minimal if it is not properly contained in any other recovery set (for the same object).
    The collection of minimal recovery sets for the~$i^\text{th}$ object is denoted by $\Rcal_i(\mathbf{G})$.
    The \textit{recovery system} of $\mathbf{G}$ is $\Rcal(\mathbf{G})=(\Rcal_1 (\mathbf{G}),\ldots,\Rcal_k (\mathbf{G}))$.
\end{definition}
A more general notion of a recovery system can include non-minimal recovery sets.
However, considering non-minimal sets does not change the SRR~\cite[Proposition 1.8]{alfarano2024service}.
We use the bandwidth model of~\cite[Sec. 1.2]{alfarano2024service}, and assume that each object $m_i$ is requested at a certain rate $\lambda_i\in\R_{\geq 0}$. 
Each server can process requests with a normalized cumulative rate of $\mu=1$.

\begin{definition}\label{def:SRR}
    A vector $(\lambda_1,\ldots,\lambda_k)\subseteq\R^k$ is \textit{supported} by $\mathbf{G}$ if there exists a \textit{feasible allocation} for it, that is, a collection of real numbers $
    \{\lambda_{i,R}\mid i\in[k],R\in\Rcal_i(\mathbf{G})\}$ such that
    \begin{empheq}[left = \empheqlbrace]{alignat=3}
         \lambda_{i,R}\geq0\quad&\mbox{for all $i\in[k]$, $R\in\Rcal_i(\mathbf{G})$,}\\
        \sum_{R\in\Rcal(\mathbf{G})}\lambda_{i,R}=\lambda_i\quad&\mbox{for all $i\in[k]$,}\\
        \sum_{i=1}^k\sum_{\substack{R\in\Rcal_i(G)\\ j\in R}}\lambda_{i,R}\leq 1\quad&\mbox{for all  $j\in[n]$}\label{eq:capacity_bound}.
    \end{empheq}
    The \textit{service rate region} of $\mathbf{G}$, abbreviated as \textit{SRR}, is the set $\Lambda(\mathbf{G})$ of all vectors supported by $\mathbf{G}$.
\end{definition}
 $\Lambda(\Gbf)$ is always a down-monotone polytope,
 often defined by several linear constraints.
Standard simplices are polytopes with a simpler description that are used to express inner and outer bounds for more complex regions~\cite{alfarano2024service}.
\begin{definition}
    For $s\in\R_{\geq0}$ and $k\in\N$, the \textit{standard simplex} of \textit{edge size} $s$ in $\R^k$ is the set
    \begin{equation*}
        \Delta_k(s)=\bigg\{(\lambda_1,\ldots,\lambda_k)\in\R_{\geq0}^k\mid\sum_{i=1}^k\lambda_i\leq s\bigg\}.
    \end{equation*}
\end{definition}
\begin{definition}\label{def:linear_code}
    An $[n,k]_q$ ($[n,k,d]_q$) \textit{code} $\code$ is a $k$-dimensional linear subspace $\code\leqslant\F_q^n$ (with minimum Hamming distance~$d$).
    A code is \textit{Maximum Distance Separable} (\textit{MDS}) if $d+k=n+1$.
    The reader is referred to~\cite{macwilliams1977theory} for the coding theory background.
\end{definition}

For example, if $\Gbf$ generates the $[2^m-1,m,2^{m-1}]_2$ simplex code $\Scal_m$, then $\Lambda(\Gbf)=\Delta_m(2^{m-1})$; see~\cite[Sec. VI.D.1, Theorem 3]{aktacs2021service}.
From the analysis in~\cite{aktacs2021service} it follows that every generator matrix $\Gbf$ of $\Scal_m$ has the same SRR.
We emphasize that this is not true in general: The SRR does depend on the matrix $\mathbf{G}$, and is not a code invariant.
However, the
properties of the code generated by $\mathbf{G}$
are helpful to analyze those of the polytope~$\Lambda(\mathbf{G})$.

In the sequel, we work with the projective space $\PG(k-1,q)$, whose elements are the (\textit{projective}) \textit{points}, and denote by 
\begin{align*}
    \phi:\F_q^k\setminus\{\mathbf{0}\}&\rightarrow\PG(k-1,q)\\
    \xbf=(x_1,\ldots,x_k)^T&\mapsto\phi(\xbf)=\bar{\xbf}=(x_1:\ldots:x_k)^T
\end{align*}
the projectivization map.
For $\bar\xbf\in\PG(k-1,q)$, we write $\xbf=\phi^{-1}(\bar{\xbf})$ to mean that $\xbf\in\F_q^k\setminus\{0\}$ is one of the $q-1$ possible pre-images of $\bar\xbf\in\PG(k-1,q)$ via this map (where the choice is irrelevant).
For every subset $S\subseteq\F_q^k\setminus\{0\}$ we let $\phi(S)=\{\phi(s)\mid s\in S\}$.
The (\textit{projective}) \textit{subspaces} of $\PG_q(k-1)$ are then defined as $\phi(V\setminus\{0\})$, for $V$ a subspace of $\F_q^k$.
The geometric description of a storage scheme is based on the projective set of the matrix~$\Gbf$.
\begin{definition}
    The \textit{projective set} of a matrix $\mathbf{G}=(\mathbf{g}_1 \mid\ldots\mid\mathbf{g}_n)$ is the multiset of points $$\Gcal=\{\bar\gbf_j=\phi(\gbf_j)\mid j\in[n]\}\subseteq\PG_q(k-1).$$
\end{definition}
As mentioned in the introduction, this paper mostly focuses on the SRR of MDS codes.
MDS codes are equivalent to arcs in the projective space~\cite[Ch. 11]{macwilliams1977theory}.
\begin{definition}\label{def:oval}
    An \textit{$n-$arc} in $\PG_q(k-1)$ is a set of $n$ points, no $k$ of which lie on a hyperplane.
    When $k=3$, $n-$arcs of maximum size in $\PG(2,q)$ are called \textit{ovals}.
    
\end{definition}
In particular, the maximum length of an MDS code of dimension $3$ corresponds to the size of an oval in $\PG(2,q)$.
\begin{notation}
    For the rest of the paper, $\code$ is an $[n,3,n-2]_q$ MDS code of maximum length $n=m(2,q)$; see \Cref{eq:oval_size}.
\end{notation}
The projective set of any generator matrix of $\code$ is an oval in $\PG(2,q)$.
Notice that by varying the field size $q$ we obtain codes of different lengths, but the dimension is always $k=3$.
In practice, $\code$ can be assumed to be a suitable extension of a Reed-Solomon code.

We continue by reviewing some geometric facts about ovals in $\PG(2,q)$ from~\cite[Ch. 8]{hirschfeld1979projective}.
Let $\Gcal\subseteq\PG(2,q)$ be an oval, $|\Gcal|=m_q(2)=n$. Then a line in $\PG(2,q)$ is \textit{external}, \textit{tangent} or \textit{secant} to $\Gcal$ if it intersects $\Gcal$ in 0, 1 or 2 points respectively.
The points of $\PG(2,q)$ are classified based on the number of tangents of $\Gcal$ they lie on.
\begin{definition}
    A point $\bar\xbf\in\PG(2,q)$ is \textit{internal}, \textit{on $\Gcal$} or \textit{external} if it lies on 0, 1 or 2 tangents to $\Gcal$.
\end{definition}
Every internal point $\bar\xbf$ lies on $n/2$ secant lines (notice $n$ is even, regardless of the parity of $q$). If $\ell_1$ and $\ell_2$ are distinct secants through $P$, then $\ell_1\cap\Gcal$ and $\ell_2\cap\Gcal$ must be disjoint, since there is a unique line in $\PG(2,q)$ through two points.
These facts imply the following elementary property, which will be key to our construction.
\begin{proposition}\label{prop:secants_incidence}
    Let $\Gcal\subseteq\PG(2,q)$ be an oval, and $\bar\bbf$ be an internal point.
    Let $L(\bar\bbf)$ be the set of secants to $\Gcal$ through $\bar\bbf$. Then $\{\ell\cap\Gcal\mid\ell\in L(\bar\bbf)\}$ is a partition of $\Gcal$ in sets of size 2.
\end{proposition}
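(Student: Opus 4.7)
The plan is to verify the three defining properties of a partition, namely that each block $\ell\cap\Gcal$ has size $2$, that different blocks are disjoint, and that their union is all of $\Gcal$. Two of these are essentially re-statements of facts recalled just before the proposition; the only nontrivial ingredient is combining them via a counting argument.

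First I would observe that every $\ell\in L(\bar\bbf)$ is a secant by definition, so $|\ell\cap\Gcal|=2$. This handles the block-size condition. Next, for disjointness, I would argue by contradiction: if two distinct secants $\ell_1,\ell_2\in L(\bar\bbf)$ shared a point $P\in\Gcal$, then both $\bar\bbf$ and $P$ would lie in $\ell_1\cap\ell_2$. Because two distinct lines of $\PG(2,q)$ meet in a unique point, this forces $P=\bar\bbf$. But $\bar\bbf$ is internal, hence it lies on no tangent, and in particular $\bar\bbf\notin\Gcal$, a contradiction. So the blocks $\ell\cap\Gcal$, $\ell\in L(\bar\bbf)$, are pairwise disjoint.

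For the covering property I would use the fact, recalled in the paragraph above the statement, that an internal point lies on exactly $n/2$ secants, i.e.\ $|L(\bar\bbf)|=n/2$. Combining this with the first two steps, the disjoint union $\bigcup_{\ell\in L(\bar\bbf)}(\ell\cap\Gcal)$ has cardinality $2\cdot(n/2)=n=|\Gcal|$ and is contained in $\Gcal$, so it must equal $\Gcal$.

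The main obstacle, such as it is, is simply being careful that the stated consequence really follows only from what has been recalled; no new geometric input is required. The key subtlety is the fact that $\bar\bbf\notin\Gcal$, which is used implicitly in the disjointness step and which I would make explicit by pointing out that an internal point is, by definition, distinct from the points on $\Gcal$ (since those lie on exactly one tangent).
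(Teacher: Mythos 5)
Your proof is correct and follows essentially the same route the paper takes: block size from the definition of secant, disjointness from the incidence axiom of the projective plane (you use the dual form ``two lines meet in a unique point'' where the paper uses ``two points determine a unique line,'' but these are equivalent), and the covering property by counting $2\cdot(n/2)=n=|\Gcal|$. You are also right to flag $\bar\bbf\notin\Gcal$ as the small point that makes the disjointness step go through.
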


\section{Main result}\label{sec:main_result}
The following theorem is the main result of this paper. It describes the non-systematic generator matrix of $\code$ and the associated recovery system.
\begin{theorem}\label{thm:internal_basis}
    Let $q\geq4$ be a prime power, $n=m(2,q)$, and let $\code$ be an $[n,3,n-2]_q$ MDS code.
    There exists a generator matrix~$\Gbf$ of $\code$ with the following properties:
    \begin{enumerate}
        \item $\min\{|R|\,:\,R\in\Rcal_i(\Gbf)\text{ for some }i\in[3]\}=2$;
        \item for every $i\in[3]$, $\Rcal_i^2(\Gbf)=\{R\in\Rcal_i(\Gbf)\mid |R|=2\}$ is a partition of the set $[n]$.
        \end{enumerate}
\end{theorem}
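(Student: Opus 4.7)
My plan is to exploit~\Cref{prop:secants_incidence} by constructing $\Gbf$ so that $\bar{\mathbf{e}}_1,\bar{\mathbf{e}}_2,\bar{\mathbf{e}}_3$ are all internal points of the oval $\Gcal\subseteq\PG(2,q)$ formed by its columns. Any two generator matrices of $\code$ differ by left multiplication by an invertible $3\times3$ matrix $\Bbf$, which acts as a projectivity on $\PG(2,q)$ and preserves the internal/tangent/external classification (since it permutes the oval with itself and sends lines to lines). The construction therefore reduces to the following: given any fixed generator matrix $\Gbf_0$ of $\code$ with projective set $\Gcal_0$, find three non-collinear internal points $\bar\bbf_1,\bar\bbf_2,\bar\bbf_3$ of $\Gcal_0$, lift them to representatives $\bbf_i\in\F_q^3$, form $\Bbf=(\bbf_1\mid\bbf_2\mid\bbf_3)$, and set $\Gbf=\Bbf^{-1}\Gbf_0$.

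The core step is then a count of internal points of $\Gcal_0$. For $q$ even with $q\geq 4$, a hyperoval admits no tangent lines, so every one of the $q^2-1$ points off $\Gcal_0$ is internal. For $q$ odd with $q\geq 5$, classical theory gives $q(q-1)/2$ internal points (see~\cite[Ch. 8]{hirschfeld1979projective}). In both regimes this quantity strictly exceeds $q+1$, the number of points on any line in $\PG(2,q)$. Hence, after fixing any two distinct internal points $\bar\bbf_1,\bar\bbf_2$, the line through them cannot exhaust the internal points, so a third non-collinear internal point $\bar\bbf_3$ is available. The hypothesis $q\geq 4$ is essential: for $q=2$ the three non-hyperoval points of $\PG(2,2)$ are collinear, obstructing the construction.

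The remaining verifications are routine. The matrix $\Bbf$ is invertible precisely because the $\bar\bbf_i$ are non-collinear, so $\Gbf$ is a bona-fide generator matrix of $\code$; its projective set $\Gcal$ is the image of $\Gcal_0$ under the projectivity induced by $\Bbf^{-1}$, which sends $\bar\bbf_i$ to $\bar{\mathbf{e}}_i$, and by projective invariance $\bar{\mathbf{e}}_1,\bar{\mathbf{e}}_2,\bar{\mathbf{e}}_3$ are internal to $\Gcal$. Property~(1) follows because $\bar{\mathbf{e}}_i\notin\Gcal$ rules out size-$1$ recovery sets while the $n/2$ secants through $\bar{\mathbf{e}}_i$ yield size-$2$ ones; property~(2) is then an immediate reformulation of~\Cref{prop:secants_incidence}. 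The only nontrivial step is the internal-point count used to locate three non-collinear internal points; once these are in hand, the rest of the argument is direct linear algebra combined with the geometric dictionary already assembled in~\Cref{sec:problem_statement}.
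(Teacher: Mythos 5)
Your proposal is correct and follows essentially the same path as the paper's own proof: reduce to finding three linearly independent internal points of the oval associated to an arbitrary generator matrix, justify their existence by the counts $q^2-1$ (even $q$) and $q(q-1)/2$ (odd $q$) exceeding the line size $q+1$, form the change-of-basis matrix from representatives of those points, and read off properties (1) and (2) from \Cref{prop:secants_incidence}. The only surface difference is framing (you phrase it as making $\bar\ebf_1,\bar\ebf_2,\bar\ebf_3$ internal to the transformed oval, while the paper keeps the original oval and tracks $\bar\bbf_i$), which amounts to the same projectivity argument; your aside about $q=2$ is a nice sanity check not included in the paper.
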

\begin{proof}
    The proof relies on incidence arguments from projective geometry. All geometric results mentioned without proof are from the standard reference~\cite[Ch. 8]{hirschfeld1979projective}.
    Let $\Fbf=(\fbf_1 \mid \ldots \mid \fbf_n)$ be any generator matrix for a $[n,3,n-2]_q$ MDS code, and let $\Gcal=\{\bar\fbf_j=\phi(\fbf_j)\mid j\in[n]\}\subseteq\PG(2,q)$ the associated oval.
    Let $\bar\bbf$ be an internal point and $L(\bar\bbf)$ be the set of secant lines through $\bar\bbf$.
    By~\Cref{prop:secants_incidence}, $\{\ell\cap\Gcal\mid\ell\in L(\bar\bbf)\}$ is a partition of $\Gcal$ in sets of size 2.
    We claim that there exist three linearly independent internal points.
    The proof is divided into two cases: $q$ even and $q$ odd.

    When $q$ is even, every point in $\PG(2,q)\setminus\Gcal$ is internal.
    Since the cardinality of the oval is $m_q(2)=q+2$, it follows that there are $q^2+q+1-(q+2)=q^2-1$ internal points.
    Since, for $q\geq 4$, we have $q^2-1> q+1$, not all of these points lie on a line; hence, 3 independent internal points exist.

    When $q$ is odd, we argue in the same way, with the exception that now we have to account for the presence of external points in $\PG(2,q)\setminus\Gcal$.
    The number of internal points in this case is $q(q-1)/2$, and we have $q(q-1)/2>q+1$ for every odd prime power $q\geq 5$.
    Since $q\geq 4$ by assumption, we deduce the existence of the desired three independent internal points.

    We now turn to the construction of the matrix~$\Gbf$.   
    For every $q$, let $\bar\bbf_1$, $\bar\bbf_2$ and $\bar\bbf_3$ be three linearly independent internal points, $\bbf_i=\phi^{-1}(\bar\bbf_i)$, and let $\Bbf=(\bbf_1 \mid \bbf_2 \mid \bbf_3)\in\F_q^{3\times 3}$.
    Since the points are independent, $\Bbf$ has full rank and $\Gbf=(\gbf_1 \mid \ldots \mid \gbf_n)=\Bbf^{-1}\Fbf$ is another generator matrix for $\code$.
    We now show that $\Gbf$ has the desired properties.
    Regarding 1), since for all $i\in[3]$ we have $\bar\bbf_i\notin\Gcal$, it follows that $\ebf_i\notin<g_j>$ for all $j\in[n]$, showing that the minimal size of a recovery set for any object cannot be 1.
    Hence, proving that 2) holds will also conclude the proof of 1).
    
    By construction, for every $J\subseteq[n]$ and $i \in \{1,2,3\}$, we have that $\ebf_i\in \langle\gbf_j\mid j\in J\rangle$ if and only if $\bbf_i\in\langle\fbf_j\mid j\in J\rangle$, which in turn is equivalent to $\bar\bbf_i\in \langle \bar\fbf_j\mid j\in J \rangle$.
    It follows that $\Rcal_i^2(\Gbf)=\{\{a,b\}\mid\exists\ell\in L(\bar\bbf_i)\textnormal{ s.t. }\ell\cap\Gcal=\{\bar\fbf_a,\bar\fbf_b\}\}$.
    In other words, for every $i \in \{1,2,3\}$, $\Rcal_i^2(\Gbf)$ is given by the partition of $[n]$ in sets of size 2 induced by the incidence relations of secant lines to $\Gcal$ through $\bar\bbf_i$, which concludes the proof.
\end{proof}
\begin{notation}
    For the remainder of the paper, $\Gbf$ is a generator matrix of $\code$ with the properties in~\Cref{thm:internal_basis}.
\end{notation}
\begin{remark}\label{rem:difference_other_MDS}
    The recovery system just described differs from the ones previously studied for MDS codes.
    For a systematic matrix $\Fbf$ with the same parameters as $\Gbf$, one has $\Rcal_i(\Fbf)=\{\{i\}\}\cup\{S\mid S\subseteq[n]\setminus\{i\},|S|=3\}$; see~\cite{aktacs2021service}.
    The non-systematic variants considered in~\cite{ly2025serviceMDS} feature a mix of systematic objects $i$ as above and non-systematic objects $i'$ having $\Rcal_{i'}(\Fbf)=\{S\subseteq[n]\mid|S|=3\}$.
    Hence, the sizes of minimal recovery sets in these constructions are either 1 or 3. This is the main difference with our construction, where the smallest recovery sets for all objects have size 2.
\end{remark}

The description of $\Rcal(\Gbf)$ we just provided allows us to compute the SRR of $\Gbf$ in the following theorem, as well as to establish some of its properties. \Cref{fig:1} shows a plot of $\Lambda(\Gbf)$.

\begin{theorem}\label{thm:internal_basis_region}
    $\Lambda(\mathbf{G})$ is a standard simplex of edge size $n/2$.
    Every point in $\Lambda(\mathbf{G})$ is supported  by an allocation $\{\lambda_{i,R}\mid i\in \{1,2,3\},\, R\in\Rcal_i(\Gbf)\}$ such that $\lambda_{i,R}\neq0$ if and only if $|R|=2$.
\end{theorem}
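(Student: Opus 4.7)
The plan is to establish the equality $\Lambda(\mathbf{G})=\Delta_3(n/2)$ by a matching pair of bounds, and to show along the way that size-2 sets alone suffice.

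First, for the outer bound, I would sum the server-capacity constraint \Cref{eq:capacity_bound} over all $j\in[n]$ to obtain
\begin{equation*}
    \sum_{i=1}^{3}\sum_{R\in\Rcal_i(\Gbf)}|R|\,\lambda_{i,R}\leq n.
\end{equation*}
By~\Cref{thm:internal_basis}(1), every recovery set satisfies $|R|\geq 2$, so the left-hand side is at least $2\sum_i\sum_R\lambda_{i,R}=2(\lambda_1+\lambda_2+\lambda_3)$. Dividing by $2$ yields $\lambda_1+\lambda_2+\lambda_3\leq n/2$, i.e., $\Lambda(\Gbf)\subseteq\Delta_3(n/2)$.

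Second, for the inner bound and the allocation claim simultaneously, I would fix an arbitrary $(\lambda_1,\lambda_2,\lambda_3)\in\Delta_3(n/2)$ and propose the uniform allocation over the size-$2$ sets:
\begin{equation*}
    \lambda_{i,R}=\begin{cases}2\lambda_i/n&\text{if }R\in\Rcal_i^2(\Gbf),\\ 0&\text{otherwise},\end{cases}
\end{equation*}
for $i\in\{1,2,3\}$. Nonnegativity and the per-object demand condition $\sum_R\lambda_{i,R}=\lambda_i$ are immediate, because $|\Rcal_i^2(\Gbf)|=n/2$ by~\Cref{thm:internal_basis}(2). For the capacity constraint at server $j$, the key point is that, again by~\Cref{thm:internal_basis}(2), for each $i$ the server $j$ belongs to exactly one set $R\in\Rcal_i^2(\Gbf)$; thus the load at $j$ collapses to
\begin{equation*}
    \sum_{i=1}^{3}\frac{2\lambda_i}{n}=\frac{2}{n}(\lambda_1+\lambda_2+\lambda_3)\leq 1,
\end{equation*}
using $\lambda_1+\lambda_2+\lambda_3\leq n/2$. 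Combining both directions gives $\Lambda(\Gbf)=\Delta_3(n/2)$, and the constructed allocation is supported on size-$2$ sets, which proves the second statement.

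I do not expect a serious obstacle: the heavy lifting has already been done in~\Cref{thm:internal_basis}, and the fact that each $\Rcal_i^2(\Gbf)$ is a partition of $[n]$ is exactly what decouples the server constraints so that the uniform allocation is tight. The only delicate point to state clearly is that the partition property makes the sum over ``$R\ni j$'' reduce to a single term, which is what enables saturation of the simplex boundary without resorting to larger recovery sets.
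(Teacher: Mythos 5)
Your proof is correct and follows the same overall strategy as the paper: show $\Lambda(\Gbf)\subseteq\Delta_3(n/2)$, then show the reverse inclusion via the uniform allocation $\lambda_{i,R}=2\lambda_i/n$ on the size-$2$ sets, using the partition property from \Cref{thm:internal_basis}(2) to reduce the per-server constraint to $\tfrac{2}{n}(\lambda_1+\lambda_2+\lambda_3)\leq 1$. The only difference is in the outer bound: you derive $\lambda_1+\lambda_2+\lambda_3\leq n/2$ directly by summing the capacity constraints over all $j\in[n]$ and using $|R|\geq 2$, whereas the paper simply invokes~\cite[Lemma~5.1]{alfarano2024service}; your version is self-contained but otherwise the argument is the same.
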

\begin{proof}
    Since the smallest recovery set has size 2, for every $(\lambda_1,\lambda_2,\lambda_3)\in\Lambda(\Gbf)$ we have $\sum_{i=1}^3\lambda_i\leq n/2$; see~\cite[Lemma~5.1]{alfarano2024service}. Hence $\Lambda(\Gbf)\subseteq\Delta_3(n/2)$.
    Let $\lambda=(\lambda_1,\lambda_2,\lambda_3)\in\Delta_3(n/2)$, and consider the allocation
    \begin{equation*}
        \bigg\{\lambda_{i,R}=\frac{2\lambda_i}{n}\mid i\in[3],R\in\Rcal_i(\Gbf),|R|=2\bigg\}.
    \end{equation*}
    By \Cref{thm:internal_basis}, for every $i\in\{1,2,3\}$ and $j\in[n]$ there exists a unique $R\in\Rcal_i^2(\Gbf)$ such that $j\in R$.
    Hence for every $j\in[n]$~\Cref{eq:capacity_bound} reads
    \begin{equation*}
        \sum_{i=1}^3\sum_{R\in\Rcal_i^2(\Gbf)}\lambda_{i,R}=\sum_{i=1}^3\frac{2\lambda_i}{n}\leq1\,,
    \end{equation*}
    showing that $\lambda$ is supported by $\Gbf$.
    It follows that $\Lambda(\Gbf)\supseteq\Delta_3(n/2)$, and the two polytopes coincide.
    Since the allocation above only uses recovery sets of size 2, this also proves the remaining part of the statement.
\end{proof}

Recall the normalized cost of a request $\lambda$, a metric introduced in~\cite[Sec. VIII]{aktacs2021service}: If $\lambda=(\lambda_1,\ldots,\lambda_k)$ is supported by an allocation $\{\lambda_{i,R}\mid i\in[k],R\in\Rcal_i\}$, the associated cost is
\begin{equation*}
C(\lambda)=\frac{\sum_{i=1}^k\sum_{R\in\Rcal_i}\lambda_{i,R}|R|}{\sum_{i=1}^k\lambda_i}.
\end{equation*}
Using the allocations  in~\Cref{thm:internal_basis_region}, one can check that the normalized cost is constant on $\Lambda(\Gbf)$, in the following sense.
\begin{corollary}
    The normalized service cost of every request vector $\lambda\in\Lambda(\Gbf)$ is $C(\lambda)=2$.
\end{corollary}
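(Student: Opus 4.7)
The plan is to feed the specific allocation constructed in the proof of \Cref{thm:internal_basis_region} directly into the definition of the normalized cost $C(\lambda)$. That allocation has the distinguishing feature that $\lambda_{i,R}\neq 0$ only when $|R|=2$, which makes the weighting by $|R|$ in the numerator trivial.

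First, I would fix $\lambda=(\lambda_1,\lambda_2,\lambda_3)\in\Lambda(\Gbf)$ and recall from \Cref{thm:internal_basis_region} that $\lambda$ is supported by the allocation with $\lambda_{i,R}=2\lambda_i/n$ for every $R\in\Rcal_i^2(\Gbf)$ and $\lambda_{i,R}=0$ for all larger recovery sets. Next, I would substitute into the numerator of $C(\lambda)$, using that every nonzero term has $|R|=2$, to obtain
\begin{equation*}
\sum_{i=1}^{3}\sum_{R\in\Rcal_i(\Gbf)}\lambda_{i,R}|R|
= 2\sum_{i=1}^{3}\sum_{R\in\Rcal_i^2(\Gbf)}\lambda_{i,R}
= 2\sum_{i=1}^{3}\lambda_i,
\end{equation*}
where the last equality uses the feasibility condition $\sum_{R\in\Rcal_i(\Gbf)}\lambda_{i,R}=\lambda_i$ from \Cref{def:SRR}. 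Dividing by $\sum_{i=1}^{3}\lambda_i$ yields $C(\lambda)=2$.

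There is essentially no obstacle here: the whole content is packaged inside \Cref{thm:internal_basis_region}, which guarantees that size-$2$ recovery sets alone suffice to support any point of $\Lambda(\Gbf)$. The only subtlety worth flagging is that $C(\lambda)$ is \emph{a priori} a property of an allocation rather than of $\lambda$ itself, so the statement should be read as: there exists a feasible allocation for $\lambda$ whose normalized cost equals $2$; the allocation exhibited in \Cref{thm:internal_basis_region} realizes this value uniformly across $\Lambda(\Gbf)$.
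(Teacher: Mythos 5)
Your proof is correct and follows exactly the route the paper indicates: plug the size-$2$ allocation from \Cref{thm:internal_basis_region} into the cost formula, note every nonzero term has $|R|=2$, and use the feasibility constraint to collapse the numerator to $2\sum_i\lambda_i$. Your closing remark that the claim should really be read as ``there exists a feasible allocation of cost $2$'' is a worthwhile clarification of a subtlety the paper only gestures at with the phrase ``in the following sense.''
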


\subsection{Comparison with other MDS matrices.}
The SRR of systematic MDS matrices with $n\geq2k$ is known~\cite{aktacs2021service}: for $q\geq5$, we have $n=m(2,q)\geq6=2k$, and for a systematic generator matrix $\Sbf$ for $\code$ we have
\begin{equation}\label{eq:systematic_MDS_region}
    \Lambda(\Sbf)=\{(\lambda_1,\lambda_2,\lambda_3)\in\R_{\geq0}^3\mid\sum_{i=1}^3(\lambda_i+k(1-\lambda_i))\leq n\}.
\end{equation}
It is easy to see that the nonzero vertices of $\Lambda(\Gbf)$ do not satisfy the inequality in~\Cref{eq:systematic_MDS_region}, hence $\Lambda(\Gbf)\nsubseteq\Lambda(\Sbf)$.
\begin{proposition}\label{prop:SRR_containment}
    Let $q\geq11$, then $\Lambda(\Gbf)\supsetneq\Lambda(\Sbf)$.
\end{proposition}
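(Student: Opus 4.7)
The argument splits into two parts. First, the non-equality direction is already handled by the remark immediately preceding the proposition: the nonzero vertices of $\Lambda(\Gbf)=\Delta_3(n/2)$ fail the defining inequality of $\Lambda(\Sbf)$ in \Cref{eq:systematic_MDS_region}, so $\Lambda(\Gbf)\not\subseteq\Lambda(\Sbf)$ and in particular $\Lambda(\Gbf)\neq\Lambda(\Sbf)$. It therefore suffices to establish the reverse inclusion $\Lambda(\Sbf)\subseteq\Lambda(\Gbf)$.

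Since \Cref{thm:internal_basis_region} identifies $\Lambda(\Gbf)$ with the standard simplex $\Delta_3(n/2)$, the remaining task reduces to showing that every $\lambda=(\lambda_1,\lambda_2,\lambda_3)\in\Lambda(\Sbf)$ satisfies $\lambda_1+\lambda_2+\lambda_3\leq n/2$. The plan is to substitute $k=3$ into the defining inequality of $\Lambda(\Sbf)$ in \Cref{eq:systematic_MDS_region} and rearrange to isolate $\sum_i\lambda_i$, obtaining an explicit linear upper bound $\sum_i\lambda_i\leq\alpha(n)$ valid on $\Lambda(\Sbf)$. One then verifies algebraically that $\alpha(n)\leq n/2$ for $n\geq 12$, which by \Cref{eq:oval_size} corresponds to $q\geq 11$ under both parities ($n=q+1$ for $q$ odd and $n=q+2$ for $q$ even).

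The main (mild) obstacle is a careful case split on the parity of $q$ to confirm that $q=11$ is a tight threshold: for smaller prime powers one expects to exhibit a point of $\Lambda(\Sbf)$ lying outside $\Delta_3(n/2)$, showing that the hypothesis cannot be weakened. Once the bound $\alpha(n)\leq n/2$ is checked for $n\geq 12$, combining it with the vertex observation of the preceding remark closes the proof of the strict containment $\Lambda(\Gbf)\supsetneq\Lambda(\Sbf)$.
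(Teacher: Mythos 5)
Your overall strategy matches the paper's: reduce the claim to showing $\Lambda(\Sbf)\subseteq\Delta_3(n/2)$, since $\Lambda(\Gbf)=\Delta_3(n/2)$ by \Cref{thm:internal_basis_region}, and use the remark before the proposition for the strict part. The gap is in the middle step.

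You propose to ``substitute $k=3$ into \Cref{eq:systematic_MDS_region} and rearrange to isolate $\sum_i\lambda_i$, obtaining an explicit linear upper bound $\sum_i\lambda_i\leq\alpha(n)$.'' This does not work as stated. Taken literally, the expression $\sum_{i=1}^{3}\bigl(\lambda_i+k(1-\lambda_i)\bigr)$ collapses to $3k+(1-k)\sum_i\lambda_i$; with $k=3$ the constraint becomes $9-2\sum_i\lambda_i\le n$, i.e.\ a \emph{lower} bound $\sum_i\lambda_i\ge(9-n)/2$ that is vacuous for the relevant $n$. The actual SRR of a systematic MDS matrix with $n\ge 2k$ is cut out by a \emph{piecewise}-linear constraint (the excess $(\lambda_i-1)^{+}$ above the systematic server's capacity is served by $k$-subsets of the remaining columns), so the maximum of $\sum_i\lambda_i$ over $\Lambda(\Sbf)$ is not obtained by a one-line rearrangement; it requires optimizing over the regions of linearity (and the maximizer is the balanced vertex $\lambda_1=\lambda_2=\lambda_3$, giving $\sum_i\lambda_i=(n+6)/3=3+(n-3)/3$). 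The paper sidesteps this computation entirely by invoking the outer simplex bound of~\cite[Corollary~6.6]{alfarano2024service}, which states directly that $\Lambda(\Sbf)\subseteq\Delta_3\bigl(3+(n-3)/3\bigr)$; from there, $3+(n-3)/3\le n/2$ is equivalent to $n\ge 12$, which holds for all prime powers $q\ge 11$ in either parity. To make your argument complete, you should either cite that corollary as the paper does, or carry out the small optimization to establish $\max_{\lambda\in\Lambda(\Sbf)}\sum_i\lambda_i=(n+6)/3$; a ``rearrangement'' of the displayed formula is not sufficient. (Also note that showing $q=11$ is a tight threshold is not required by the statement and can be dropped.)
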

\begin{proof}
    From~\cite[Corollary 6.6]{alfarano2024service} we have that $\Lambda(\Sbf)\subseteq\Delta_3(3+(n-3)/3)$.
    By definition of $n=m(2,q)$, for $q\geq11$ we have $3+(n-3)/3\leq n/2$, implying $\Delta_3(3+(n-3)/3)\subseteq\Delta_3(n/2)=\Lambda(\Gbf)$.
\end{proof}

\begin{figure}
    \centering
    \begin{subfigure}[b]{0.25\textwidth}
        \centering
        \begin{tikzpicture}%
	[x={(1.000000cm, 0.000000cm)},
	y={(0.000000cm, 1.000000cm)},
	z={(-0.200000cm, -0.200000cm)},
	scale=1.000000,
	back/.style={dashed, thin},
	edge/.style={color=orange!95!black, thick},
	facet/.style={fill=orange!95!black,fill opacity=0.500000},
	vertex/.style={inner sep=1pt,circle,draw=red!25!black,fill=red!75!black,thick}]
%
%

\coordinate (0.00000, 0.00000, 0.00000) at (0.00000, 0.00000, 0.00000);
\coordinate (0.00000, 0.00000, 3.00000) at (0.00000, 0.00000, 3.00000);
\coordinate (0.00000, 3.00000, 0.00000) at (0.00000, 3.00000, 0.00000);
\coordinate (3.00000, 0.00000, 0.00000) at (3.00000, 0.00000, 0.00000);
\draw[edge,back] (0.00000, 0.00000, 0.00000) -- (0.00000, 0.00000, 3.00000);
\draw[edge,back] (0.00000, 0.00000, 0.00000) -- (0.00000, 3.00000, 0.00000);
\draw[edge,back] (0.00000, 0.00000, 0.00000) -- (3.00000, 0.00000, 0.00000);
\node[vertex] at (0.00000, 0.00000, 0.00000)     {};
\fill[style={fill=orange!95!black,fill opacity=0.500000}] (3.00000, 0.00000, 0.00000) -- (0.00000, 0.00000, 3.00000) -- (0.00000, 3.00000, 0.00000) -- cycle {};
\draw[edge] (0.00000, 0.00000, 3.00000) -- (0.00000, 3.00000, 0.00000);
\draw[edge] (0.00000, 0.00000, 3.00000) -- (3.00000, 0.00000, 0.00000);
\draw[edge] (0.00000, 3.00000, 0.00000) -- (3.00000, 0.00000, 0.00000);
\node[vertex] at (0.00000, 0.00000, 3.00000)     {};
\node[vertex] at (0.00000, 3.00000, 0.00000)     {};
\node[vertex] at (3.00000, 0.00000, 0.00000)     {};
\end{tikzpicture}
        \caption{Non-systematic region $\Lambda(\Gbf)$.}\label{fig:1}
    \end{subfigure}%
    \begin{subfigure}[b]{0.25\textwidth}
        \centering
        \begin{tikzpicture}%
	[
    x={(1.000000cm, 0.000000cm)},
	y={(0.000000cm, 1.000000cm)},
	z={(-0.200000cm, -0.200000cm)},
	scale=1.000000,
	back/.style={dashed, thin},
	edge/.style={thick},
	facet/.style={fill=blue!95!black,fill opacity=0.500000},
	vertex/.style={inner sep=1pt,circle,draw=red!25!black,fill=red!75!black,thick}
    ]
%
%

\coordinate (0.00000, 0.00000, 0.00000) at (0.00000, 0.00000, 0.00000);
\coordinate (0.00000, 0.00000, 3.00000) at (0.00000, 0.00000, 3.00000);
\coordinate (0.00000, 3.00000, 0.00000) at (0.00000, 3.00000, 0.00000);
\coordinate (3.00000, 0.00000, 0.00000) at (3.00000, 0.00000, 0.00000);

\coordinate (0.00000, 0.00000, 2.66667) at (0.00000, 0.00000, 2.66667);
\coordinate (0.00000, 1.00000, 2.33333) at (0.00000, 1.00000, 2.33333);
\coordinate (0.00000, 2.33333, 1.00000) at (0.00000, 2.33333, 1.00000);
\coordinate (0.00000, 2.66667, 0.00000) at (0.00000, 2.66667, 0.00000);
\coordinate (1.00000, 1.00000, 2.00000) at (1.00000, 1.00000, 2.00000);
\coordinate (1.00000, 2.00000, 1.00000) at (1.00000, 2.00000, 1.00000);
\coordinate (2.00000, 1.00000, 1.00000) at (2.00000, 1.00000, 1.00000);
\coordinate (1.00000, 0.00000, 2.33333) at (1.00000, 0.00000, 2.33333);
\coordinate (1.00000, 2.33333, 0.00000) at (1.00000, 2.33333, 0.00000);
\coordinate (2.33333, 0.00000, 1.00000) at (2.33333, 0.00000, 1.00000);
\coordinate (2.33333, 1.00000, 0.00000) at (2.33333, 1.00000, 0.00000);
\coordinate (2.66667, 0.00000, 0.00000) at (2.66667, 0.00000, 0.00000);
\draw[edge,back] (0.00000, 0.00000, 0.00000) -- (0.00000, 0.00000, 3.00000);
\draw[edge,back] (0.00000, 0.00000, 0.00000) -- (0.00000, 3.00000, 0.00000);
\draw[edge,back] (0.00000, 0.00000, 0.00000) -- (3.00000, 0.00000, 0.00000);
\node[vertex] at (0.00000, 0.00000, 0.00000)     {};
\fill[style={fill=orange!95!black,fill opacity=0.500000}] (3.00000, 0.00000, 0.00000) -- (0.00000, 0.00000, 3.00000) -- (0.00000, 3.00000, 0.00000) -- cycle {};
\draw[edge] (0.00000, 0.00000, 3.00000) -- (0.00000, 3.00000, 0.00000);
\draw[edge] (0.00000, 0.00000, 3.00000) -- (3.00000, 0.00000, 0.00000);
\draw[edge] (0.00000, 3.00000, 0.00000) -- (3.00000, 0.00000, 0.00000);
\node[vertex] at (0.00000, 0.00000, 3.00000)     {};
\node[vertex] at (0.00000, 3.00000, 0.00000)     {};
\node[vertex] at (3.00000, 0.00000, 0.00000)     {};
\fill[facet] (2.66667, 0.00000, 0.00000) -- (2.33333, 0.00000, 1.00000) -- (2.00000, 1.00000, 1.00000) -- (2.33333, 1.00000, 0.00000) -- cycle {};
\fill[facet] (1.00000, 2.33333, 0.00000) -- (0.00000, 2.66667, 0.00000) -- (0.00000, 2.33333, 1.00000) -- (1.00000, 2.00000, 1.00000) -- cycle {};
\fill[facet] (2.33333, 1.00000, 0.00000) -- (2.00000, 1.00000, 1.00000) -- (1.00000, 2.00000, 1.00000) -- (1.00000, 2.33333, 0.00000) -- cycle {};
\fill[facet] (2.00000, 1.00000, 1.00000) -- (1.00000, 1.00000, 2.00000) -- (1.00000, 2.00000, 1.00000) -- cycle {};
\fill[facet] (1.00000, 0.00000, 2.33333) -- (0.00000, 0.00000, 2.66667) -- (0.00000, 1.00000, 2.33333) -- (1.00000, 1.00000, 2.00000) -- cycle {};
\fill[facet] (2.33333, 0.00000, 1.00000) -- (2.00000, 1.00000, 1.00000) -- (1.00000, 1.00000, 2.00000) -- (1.00000, 0.00000, 2.33333) -- cycle {};
\fill[facet] (1.00000, 2.00000, 1.00000) -- (0.00000, 2.33333, 1.00000) -- (0.00000, 1.00000, 2.33333) -- (1.00000, 1.00000, 2.00000) -- cycle {};
\draw[edge] (0.00000, 0.00000, 2.66667) -- (0.00000, 1.00000, 2.33333);
\draw[edge] (0.00000, 0.00000, 2.66667) -- (1.00000, 0.00000, 2.33333);
\draw[edge] (0.00000, 1.00000, 2.33333) -- (0.00000, 2.33333, 1.00000);
\draw[edge] (0.00000, 1.00000, 2.33333) -- (1.00000, 1.00000, 2.00000);
\draw[edge] (0.00000, 2.33333, 1.00000) -- (0.00000, 2.66667, 0.00000);
\draw[edge] (0.00000, 2.33333, 1.00000) -- (1.00000, 2.00000, 1.00000);
\draw[edge] (0.00000, 2.66667, 0.00000) -- (1.00000, 2.33333, 0.00000);
\draw[edge] (1.00000, 1.00000, 2.00000) -- (1.00000, 2.00000, 1.00000);
\draw[edge] (1.00000, 1.00000, 2.00000) -- (2.00000, 1.00000, 1.00000);
\draw[edge] (1.00000, 1.00000, 2.00000) -- (1.00000, 0.00000, 2.33333);
\draw[edge] (1.00000, 2.00000, 1.00000) -- (2.00000, 1.00000, 1.00000);
\draw[edge] (1.00000, 2.00000, 1.00000) -- (1.00000, 2.33333, 0.00000);
\draw[edge] (2.00000, 1.00000, 1.00000) -- (2.33333, 0.00000, 1.00000);
\draw[edge] (2.00000, 1.00000, 1.00000) -- (2.33333, 1.00000, 0.00000);
\draw[edge] (1.00000, 0.00000, 2.33333) -- (2.33333, 0.00000, 1.00000);
\draw[edge] (1.00000, 2.33333, 0.00000) -- (2.33333, 1.00000, 0.00000);
\draw[edge] (2.33333, 0.00000, 1.00000) -- (2.66667, 0.00000, 0.00000);
\draw[edge] (2.33333, 1.00000, 0.00000) -- (2.66667, 0.00000, 0.00000);
\node[vertex] at (0.00000, 0.00000, 2.66667)     {};
\node[vertex] at (0.00000, 1.00000, 2.33333)     {};
\node[vertex] at (0.00000, 2.33333, 1.00000)     {};
\node[vertex] at (0.00000, 2.66667, 0.00000)     {};
\node[vertex] at (1.00000, 1.00000, 2.00000)     {};
\node[vertex] at (1.00000, 2.00000, 1.00000)     {};
\node[vertex] at (2.00000, 1.00000, 1.00000)     {};
\node[vertex] at (1.00000, 0.00000, 2.33333)     {};
\node[vertex] at (1.00000, 2.33333, 0.00000)     {};
\node[vertex] at (2.33333, 0.00000, 1.00000)     {};
\node[vertex] at (2.33333, 1.00000, 0.00000)     {};
\node[vertex] at (2.66667, 0.00000, 0.00000)     {};

\end{tikzpicture}
        \caption{$\Lambda(\Sbf)$ and $\Lambda(\Gbf)$ for $q<11$.}\label{fig:2}
    \end{subfigure}
    \vspace{-0.8cm}
\label{fig:3}
\end{figure}

See~\Cref{fig:2} for a plot comparing
$\Lambda(\Gbf)$ and $\Lambda(\Sbf)$ for $q<11$.
Notice that none of the regions contains the other in this case.
The non-systematic matrices considered in~\cite{ly2025serviceMDS} have an SRR contained in the systematic one with the same parameters.

\subsection{Finding internal points} The explicit construction of the matrix $\Gbf$ in~\Cref{thm:internal_basis} relies on finding internal points.
When $q$ is even, this is an easy task, as all of the points of $\PG(2,q)$ not contained in the oval are internal.
When $q$ is odd instead, $\PG(2,q)\setminus\Gcal$ contains $q(q+1)/2$ external points, and 
testing whether a point is internal or external can be a lengthy procedure when $q$ is large.
If the matrix $\Fbf$ is chosen carefully, one can use the fact that the chosen oval is a conic~\cite{segre1955ovals} to speed up computations.
In fact, let $\alpha\in\F_q$ be a primitive element and
\begin{equation*}
    \Fbf=(\fbf_1\mid\ldots\mid\fbf_n)=\begin{pmatrix}
        1 & 1 & 1 & \ldots & 1 & 0 \\
        0 & \alpha & \alpha^2 & \ldots & \alpha^{q-1} & 0\\
        0 & \alpha^2 & \alpha^4 & \ldots & \alpha^{2(q-1)} & 1
    \end{pmatrix}.
\end{equation*}
The matrix $\Fbf$ generates an extension of a Reed-Solomon code having the required parameters, and $\Fbf$ is usually called the \textit{Vandermonde} generator matrix.
It can be checked that the set $\Gcal=\{\bar\fbf_1,\ldots,\bar\fbf_n\}$ is the conic of equation $y_2^2-y_3y_1=0$.
In this case, internal points are characterized algebraically~\cite[Theorem 8.3.3]{hirschfeld1979projective} as follows. A point $\bar\xbf=(x_1:x_2:x_3)$ is internal if and only if $x_2^2-x_1x_3$ is not a square in $\F_q$.
This provides a quick way to find internal points by sampling random points in $\PG(2,q)$ and applying the algebraic test until 3 independent internal points are found.
We use this property in the following example, which sums up the construction described in this section.
\begin{example}
    Let $q=7$, for which $n=m(2,q)=8$.
    Then $\alpha=3$ is a primitive element and the Vandermonde generator matrix for the $[n,3]_q$ extended Reed-Solomon code is
    \begin{equation*}
        \Fbf=\begin{pmatrix}
            1 & 1 & 1 & 1 & 1 & 1 & 1 & 0\\
            0 & 3 & 2 & 6 & 4 & 5 & 1 & 0\\
            0 & 2 & 4 & 1 & 2 & 4 & 1 & 1
        \end{pmatrix}.
    \end{equation*}
    Using the criterion outlined above, one checks that the points $\bar\bbf_1=(1:0:1)$, $\bar\bbf_2=(1:0:2)$ and $\bar\bbf_3=(1:1:2)$ are internal. Moreover, they are linearly independent.
    Then the matrix $\Gbf$ of~\Cref{thm:internal_basis} is
    \begin{equation*}
        \Gbf=\begin{pmatrix}
            1 & 1 & 1\\
            0 & 0 & 1\\
            4 & 2 & 2
        \end{pmatrix}^{-1}\cdot\Fbf=\begin{pmatrix}
            6 & 0 & 1 & 3 & 0 & 1 & 3 & 4\\
            2 & 5 & 5 & 6 & 4 & 2 & 4 & 3\\
            0 & 3 & 2 & 6 & 4 & 5 & 1 &0
        \end{pmatrix}.   
    \end{equation*}
    The recovery sets of size 2 are
    and it can be checked that
    \begin{align}\label{eq:example_recovery_sets}
        \Rcal_1^2(\Gbf)&=\{\{4,5\},\{3,6\},\{2,7\},\{1,8\}\},\nonumber\\
        \Rcal_2^2(\Gbf)&=\{\{3,4\},\{2,5\},\{6,7\},\{1,8\}\},\\
        \Rcal_3^2(\Gbf)&=\{\{1,3\},\{2,5\},\{4,6\},\{7,8\}\},\nonumber
    \end{align}
    and the SRR then is $\Lambda(\Gbf)=\Delta_3(4)$.
\end{example}

\section{Applications}\label{sec:applications}
In this section, we show how the implications of~\Cref{thm:internal_basis} extend outside of the SRR research stream.

\subsection{PIR codes} Private Information Retrieval (PIR) protocols allow users to access a database without disclosing which item they are downloading to the database owner~\cite{fazeli2015codes}.
In this context, linear codes are used to describe the privacy-preserving access structure.
Essentially, a PIR code is a code for which every message symbol can be recovered from many disjoint recovery sets.
PIR codes are related to the \textit{integral} SRR, that is, the set of points in the SRR that are supported by allocations whose coefficients are all integers~\cite[Proposition 4]{aktacs2021service}.
A simple application of the recovery structure of~\Cref{thm:internal_basis} shows that the use of the matrix $\Gbf$ yields high performance of $\code$ as a PIR code.
More precisely, using the sets in $\Rcal_i^2(\Gbf)$ as a PIR access structure, we have the following
\begin{corollary}
    $\code$ is an $n/2$-PIR code.
\end{corollary}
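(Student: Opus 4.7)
The plan is to unpack the definition of an $s$-PIR code and then read the conclusion directly off of \Cref{thm:internal_basis}. Recall that an $[n,k]_q$ linear code is an $s$-PIR code when, for every message symbol $i \in [k]$, there exist $s$ pairwise disjoint subsets of $[n]$ each of which is a recovery set for the $i^\text{th}$ symbol with respect to the chosen generator matrix. With $k=3$ here, the task is therefore to exhibit, for each $i \in \{1,2,3\}$, a family of $n/2$ pairwise disjoint recovery sets for the $i^\text{th}$ object.

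Such families are already provided by part (2) of \Cref{thm:internal_basis}: the collection $\Rcal_i^2(\Gbf) \subseteq \Rcal_i(\Gbf)$ of size-$2$ minimal recovery sets forms a partition of $[n]$ for each $i \in \{1,2,3\}$. Since $|R|=2$ for every $R \in \Rcal_i^2(\Gbf)$ and the sets in $\Rcal_i^2(\Gbf)$ cover $[n]$ without overlap, there are exactly $n/2$ such sets, and they are pairwise disjoint. Moreover, by construction they are all recovery sets for the $i^\text{th}$ symbol. Hence $\Rcal_i^2(\Gbf)$ witnesses the $n/2$-PIR property for object $i$.

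As this argument applies uniformly for every $i \in \{1,2,3\}$, one concludes that $\code$, presented through the generator matrix $\Gbf$, is an $n/2$-PIR code. There is no real obstacle here: the entire content has already been packed into \Cref{thm:internal_basis}, and the corollary amounts to recognizing that ``partition into pairs'' is precisely the structural property required by the PIR definition. The only thing one might want to briefly remark on is the optimality of the number $n/2$, which follows from the fact that no recovery set can have size smaller than $2$ (part (1) of \Cref{thm:internal_basis}), so one cannot fit more than $\lfloor n/2 \rfloor = n/2$ disjoint recovery sets inside $[n]$.
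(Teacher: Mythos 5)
Your proposal is correct and matches the paper's (implicit) argument exactly: the paper leaves this corollary without a formal proof, simply remarking that one uses $\Rcal_i^2(\Gbf)$ as the PIR access structure, and your write-up spells out precisely that reasoning — the partition of $[n]$ into $n/2$ disjoint pairs from part (2) of \Cref{thm:internal_basis} directly instantiates the definition of an $n/2$-PIR code. The closing optimality remark is a nice extra observation, not required for the corollary but consistent with the paper.
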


\subsection{Decoding} Majority-Logic Decoding (MLD) is a broad class of low-complexity error-correcting algorithms that are usually multi-step procedures. For example, the celebrated Reed decoding algorithm~\cite{reed1953class} for binary Reed-Muller codes of order $r$ consists of $r+1$ consecutive steps.
When there is only one step to perform, the algorithm is called 1-Step MLD (1S-MLD).
For more details on MLD algorithms, see~\cite{macwilliams1977theory,peterson1972error}.

It was already observed in the literature that studying the structure of recovery sets of a generator matrix gives useful insights into the design of 1S-MLD algorithms.
For example, a novel 1S-MLD procedure for Reed-Muller codes was recently proposed in~\cite{ly2025optimum}, based on the study of recovery sets conducted in~\cite{ly2025serviceRM}.
In general, each recovery set gives a vote, and errors can flip the votes.
The decoding algorithm can then correct any error pattern that leaves a majority of votes intact for each message symbol.
This mechanism is similar to PIR codes, and, in fact, a connection between them and 1S-MLD has been established in~\cite{fazeli2015codes}.
By carefully selecting the sets used to form the votes, one can design algorithms that correct many error patterns.

The recovery set structure described in~\Cref{thm:internal_basis} can be exploited to correct a large number of errors with 1S-MLD.
In fact, for every message symbol $m_i$, the recovery groups in $\Rcal_i^2(\Gbf)$ give $n/2$ votes with disjoint inputs.
It follows that an error affects exactly one of these groups. As long as the number of errors does not exceed half the number of groups, the majority of the votes will still be correct.
More precisely, we have the following result.

\begin{theorem}\label{thm:1S-MLD_algorithm}
    By encoding $\code$ with the generator matrix $\Gbf$ of~\Cref{thm:internal_basis}, it is possible to correct any \smash{$t\leq\lfloor\frac{n-2}{4}\rfloor$} errors with 1S-MLD. 
\end{theorem}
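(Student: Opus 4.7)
The plan is to exploit the partition property from \Cref{thm:internal_basis}: for each $i\in\{1,2,3\}$, the collection $\Rcal_i^2(\Gbf)$ partitions $[n]$ into $n/2$ pairs. This partition is the backbone of the 1S-MLD algorithm: each pair yields an independent estimate of the $i^{\text{th}}$ message symbol, and majority voting among these $n/2$ estimates returns the correct value whenever few enough servers have been corrupted.

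First, I would formalize the voting procedure. Fix $i\in\{1,2,3\}$ and, for each $R=\{a,b\}\in\Rcal_i^2(\Gbf)$, note that by definition of a recovery set there exist $\alpha_R,\beta_R\in\F_q$ such that $\mathbf{e}_i=\alpha_R\gbf_a+\beta_R\gbf_b$. Hence, for every codeword $\mathbf{c}^T=\mathbf{m}^T\Gbf$, one has $m_i=\alpha_R c_a+\beta_R c_b$. Given a received vector $\mathbf{y}=\mathbf{c}+\mathbf{z}$ whose error $\mathbf{z}$ is supported on some set $E\subseteq[n]$ with $|E|=t$, the vote cast by $R$ is $v_R=\alpha_R y_a+\beta_R y_b$, which equals $m_i$ whenever $R\cap E=\emptyset$.

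The core combinatorial step is then immediate from part~2 of \Cref{thm:internal_basis}: since the $n/2$ pairs in $\Rcal_i^2(\Gbf)$ are pairwise disjoint, each corrupted coordinate spoils at most one vote. Thus at least $n/2-t$ votes equal $m_i$, while the remaining (at most $t$) erroneous votes are distributed arbitrarily across $\F_q\setminus\{m_i\}$. The 1S-MLD rule, which outputs the most frequent vote, therefore returns $m_i$ as long as $n/2-t>t$, i.e., whenever $t<n/4$. Since $n=m(2,q)$ is always even by~\Cref{eq:oval_size}, a short case distinction on $n \bmod 4$ shows that for integral $t$ this is equivalent to $t\leq\lfloor(n-2)/4\rfloor$. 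Applying the argument simultaneously to each of the three symbols completes the proof. No genuine obstacle is anticipated: the result is essentially a pigeonhole observation riding on the disjointness structure of \Cref{thm:internal_basis}, and the only mildly delicate point is translating the strict inequality $t<n/4$ into the announced floor bound.
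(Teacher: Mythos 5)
Your proposal is correct and follows essentially the same route as the paper's proof: both exploit the fact that $\Rcal_i^2(\Gbf)$ partitions $[n]$ into $n/2$ disjoint pairs, so each corrupted coordinate spoils at most one vote, and majority decoding succeeds as long as strictly fewer than half the votes are wrong. The paper states the threshold directly as $\lfloor((n/2)-1)/2\rfloor$, while you derive $t<n/4$ and then convert to $\lfloor(n-2)/4\rfloor$; these are the same bound, and your added detail on the explicit voting coefficients $\alpha_R,\beta_R$ is a harmless elaboration of what the paper leaves implicit.
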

\begin{proof}
    For every message symbol $m_i$, we use the recovery groups in $\Rcal_i^2(\Gbf)$ to form $n/2$ votes with disjoint inputs.
    With this configuration, any error affects exactly one of these groups for each message symbol.
    Since we need a majority of the votes to be correct, there can be at most \smash{$\lfloor\frac{(n/2)-1}{2}\rfloor$} errors.
\end{proof}
\begin{example}
    Consider again the recovery structure described in~\Cref{eq:example_recovery_sets}, for which \smash{$\lfloor\frac{n-2}{4}\rfloor=1$}.
    One error affects at most one vote for each symbol, leaving three correct votes.
\end{example}

The reader familiar with 1S-MLD may have noticed that the error correcting capability of the algorithm described above is higher than the bound of~\cite[Theorem 10.2]{peterson1972error}, which states that the number $t$ of errors correctable by 1S-MLD is constrained by \smash{$t\leq\lfloor\frac{n-1}{2(d^\perp-1)}\rfloor=\lfloor\frac{n-1}{6}\rfloor$}, where $d^\perp=4$ is the minimum distance of the dual code of $\code$. 
This is not a contradiction, for the bound of~\cite{peterson1972error} applies only to algorithms trying to recover the \textit{codeword symbols}, as is clear from its proof.
Instead, the decoding algorithm of~\Cref{thm:1S-MLD_algorithm} is designed to recover message symbols, for which the recovery structure is subject to different constraints.

We point out this difference because some codes, such as Reed-Solomon codes, are not considered suitable for 1S-MLD in the literature due to~\cite[Theorem 10.2]{peterson1972error}, which gives a discouraging upper bound on the number of correctable errors.
When recovering the message symbols, a more suitable upper bound is the following result, which applies to algorithms operating on disjoint recovery sets.

\begin{theorem}\label{thm:1S-MLD_upper_bound}
    Let $\Fbf$ be the generator matrix of a $[n,k,d]_q$ code $\Dcode$, and let $d^\perp$ denote the minimum distance of $\Dcode^\perp$.
    For $i\in[k]$, let $\Tcal_i\subseteq\Rcal_i(\Gbf)$ have the property that for every $R_1,R_2\in\Tcal_i$, we have $R_1\cap R_2=\emptyset$.
    Let $s=\min\{|R|\mid\exists\, i\in[k]\text{ s.t. }R\in\Tcal_i\}$. Then the number of errors, say $t$, that can be corrected by 1S-MLD using the $\Tcal_i$'s as recovery groups is bounded as follows:
    \begin{equation*}
        t\leq\bigg\lfloor\frac{n-s}{2(d^\perp-s)}\bigg\rfloor.
    \end{equation*}
\end{theorem}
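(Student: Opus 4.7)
The plan is to translate the pairwise disjointness of the minimal recovery sets in each $\Tcal_i$ into the existence of low-weight elements of the dual code $\Dcode^\perp$, apply the dual-distance bound to lower-bound the sizes of all but the smallest set in $\Tcal_i$, and then combine a counting argument with the strict-majority constraint of 1S-MLD to bound $t$.

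First, I would fix $i\in[k]$ and enumerate $\Tcal_i=\{R_1,\dots,R_{J_i}\}$. For each $l$, since $R_l$ is a minimal recovery set for $m_i$, one can choose $\xbf_l\in\F_q^n$ with $\Fbf\xbf_l=\ebf_i$ and $\mathrm{supp}(\xbf_l)=R_l$; the equality (and not merely inclusion) is the only place where minimality of $R_l$ is used, and it follows because dropping any zero coefficient would exhibit a strictly smaller recovery set. For distinct $l,m$, the difference $\xbf_l-\xbf_m$ is nonzero, lies in $\ker(\Fbf)=\Dcode^\perp$, and has support equal to $R_l\sqcup R_m$ by disjointness; hence $|R_l|+|R_m|\geq d^\perp$. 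Letting $R_*$ be a minimum-size set in $\Tcal_i$, of size $s_i$, this gives $|R_l|\geq d^\perp-s_i$ for every $l\neq *$.

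Summing and using disjointness inside $[n]$,
\[
s_i+(J_i-1)(d^\perp-s_i)\leq\sum_{l=1}^{J_i}|R_l|\leq n,
\]
so in the informative regime $d^\perp>s_i$ (outside of which the claimed bound is vacuous) one has $J_i\leq 1+(n-s_i)/(d^\perp-s_i)$. On the decoding side, pairwise disjointness of the $R_l$'s ensures each error corrupts at most one of the $J_i$ votes reconstructing $m_i$, so strict-majority 1S-MLD corrects $t$ errors only if $t<J_i/2$, i.e.\
\[
t\leq\left\lfloor\frac{J_i-1}{2}\right\rfloor\leq\left\lfloor\frac{n-s_i}{2(d^\perp-s_i)}\right\rfloor
\qquad \text{for every } i\in[k].
\]

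The main subtlety, and the step I would spend most care on, is the final passage from $s_i$ to $s=\min_i s_i$. The function $\sigma\mapsto(n-\sigma)/(d^\perp-\sigma)$ has derivative of the same sign as $n-d^\perp$; in the standard regime $n\geq d^\perp$ (which holds for any code of positive dimension) it is non-decreasing, so the tightest, hence binding, bound over $i$ is realized at the index where $s_i=s$. Substituting yields $t\leq\lfloor(n-s)/(2(d^\perp-s))\rfloor$, as claimed.
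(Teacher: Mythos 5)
Your proof is correct and follows the paper's argument: pairwise disjointness within each $\Tcal_i$ produces dual codewords whose supports lower-bound the group sizes, and the packing of disjoint groups into $[n]$ together with the strict-majority requirement of 1S-MLD gives the stated bound. You explicitly carry out what the paper compresses into ``one concludes as for the standard bound,'' and your auxiliary observations (minimality forcing $\mathrm{supp}(\xbf_l)=R_l$ exactly, and monotonicity of $\sigma\mapsto(n-\sigma)/(d^\perp-\sigma)$ justifying the passage from $s_i$ to $s$) are both sound.
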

\begin{proof}
    Let $J_1,J_2\in\Tcal_i$. Then $J_1\cup J_2$ contains the support of a codeword of $\Dcode^\perp$.
    Hence $\Rcal_i(\Fbf)$ contains a recovery group of size $s$, and all other groups have size $\geq d^\perp-s$.
    As these are disjoint, one concludes  as for the standard bound in~\cite{peterson1972error}.
\end{proof}

Notice that when $s=1$ we recover~\cite[Theorem 10.2]{peterson1972error}, but in general the bound permits a larger number of correctable errors if $s>1$.
In particular, $s$ can be as big as $d^\perp/2$, and this is the case for the matrix $\Gbf$ of~\Cref{thm:internal_basis}, for which $s=2$.
In fact, the matrix $\Gbf$ achieves the bound of~\Cref{thm:1S-MLD_upper_bound}, implying the algorithm of~\Cref{thm:1S-MLD_algorithm} is optimal.

\section{Future work}\label{sec:conclusion}
The geometric approach developed in this paper can be extended to other classes of MDS matrices, requiring generalizations of the methods introduced here. Our results suggest that such extensions could provide similar benefits, as the same structural ideas can support both storage and related applications, such as decoding. Another promising direction is to explore recovery systems arising from the incidence structures of geometric objects beyond ovals and arcs, potentially revealing new matrices with optimal service performance. 

\section*{Acknowledgment}
A.\ Di Giusto research was supported by the European Commission through grant 101072316. E.~Soljanin's research was partially supported by the National Science Foundation under Grant No. CIF-2122400.

\newpage
\IEEEtriggeratref{11}
\bibliographystyle{plain}
\bibliography{refs}

\end{document}